\DeclarePairedDelimiter\floor{\lfloor}{\rfloor}
\newsavebox{\measurebox}
\tikzset{me/.style={to path={
\pgfextra{%
 \pgfmathsetmacro{\startf}{-(#1-1)/2}  
 \pgfmathsetmacro{\endf}{-\startf} 
 \pgfmathsetmacro{\stepf}{\startf+1}}
 \ifnum 1=#1 -- (\tikztotarget)  \else
     let \p{mid}=($(\tikztostart)!0.5!(\tikztotarget)$) 
         in
\foreach \i in {\startf,\stepf,...,\endf}
    {%
     (\tikztostart) .. controls ($ (\p{mid})!\i*6pt!90:(\tikztotarget) $) .. (\tikztotarget)
      }
      \fi   
     \tikztonodes
}}}  
\newtheorem{theorem}{Theorem}
\newtheorem{lemma}[theorem]{Lemma}
\newtheorem{corollary}[theorem]{Corollary}
\newtheorem{definition}[theorem]{Definition}
\newtheorem{problem}[theorem]{Problem}
\crefname{lemma}{Lemma}{Lemmas}
\crefname{theorem}{Theorem}{Theorems}
\crefname{corollary}{Corollary}{Corollaries}
\crefname{figure}{Figure}{Figures}
\crefname{algorithm}{Algorithm}{Algorithms}
\title{Constant Amortized Time Enumeration of Eulerian trails}
\author[1]{Kazuhiro Kurita}
\author[2]{Kunihiro Wasa}
\affil[1]{National Institute of Informatics, Tokyo, Japan, \texttt{kurita@nii.ac.jp}}
\affil[2]{Toyohashi University of Technology, Aichi, Japan, \texttt{wasa@cs.tut.ac.jp}}
\date{}
\newcommand{\size}[1]{\lvert#1\rvert}
\newcommand{\set}[1]{\left\{#1\right\}}
\newcommand{\boundary}[2]{\partial_{#2}(#1)}
\newcommand{\remain}[2]{\overline{#1_{#2}}}
\newcommand{\cand}[2]{\Gamma_{#2}(#1)}
\newcommand{\children}[1]{\mathcal{C}(#1)}
\newcommand{\order}[1]{\mathcal{O}(#1)}
\newcommand{\multiplicity}[1]{\mu_{#1}}
\newcommand{\rulename}{Contraction Rule}
\newcommand{\rulePendant}{\rulename{}~1}
\newcommand{\ruleBridge}{\rulename{}~2}
\newcommand{\ruleMod}{\rulename{}~3}
\newcommand{\ruleBigPendant}{\rulename{}~4}
\newcommand{\ruleGenTwoGraphs}{Multibirth Rule}
\newcommand{\etal}{\textit{et al.}}
\newcommand{\solsize}{N}
\providecommand{\keywords}[1]{\textbf{\textit{Keywords---}} #1}
\begin{document}

\maketitle

\begin{abstract}
In this paper, we consider enumeration problems for edge-distinct and vertex-distinct Eulerian trails. Here, two Eulerian trails are \emph{edge-distinct} if the edge sequences are not identical, and they are \emph{vertex-distinct} if the vertex sequences are not identical. As the main result, we propose optimal enumeration algorithms for both problems, that is, these algorithm runs in $\mathcal{O}(N)$ total time, where $N$ is the number of solutions. Our algorithms are based on the reverse search technique introduced by [Avis and Fukuda, DAM 1996], and the push out amortization technique introduced by [Uno, WADS 2015]. 

\mbox{}\\
\noindent
\keywords{Eulerian trail, Enumeration algorithm, Output-sensitive algorithm, Constant Amortized Time}
\end{abstract}

\section{Introduction}
An \emph{Eulerian trail} in a graph is a trail that visits all edges exactly once.
Finding a Eulerian trail is a classical problem in graph theory. A famous Seven Bridges of K\"onigsberg problem solved by Leonhard Euler in 1736 is one of a well-known application.
Eulerian trails have many other applications such as CMOS circuit design~\cite{Chen:He:Huang:2002} bioinformatics~\cite{Kingsford:Schatz:Pop:2010,Roy:2007}, and automaton theory~\cite{Kari:2003}.  

Deciding for the existence of a Eulerian trail can be done in time polynomial~\cite{Hierholzer:1873}.
However, the counting problem of \emph{edge-distinct} Eulerian trails is \#\P-complete~\cite{Brightwell:Winkler:2005} for general undirected graphs, 
although for directed graphs, Aardenne and Brujin~\cite{Aardenne:Brujin:1951} proposed the BEST algorithm whose running time is polynomial time. 
Here, two Eulerian trails are \emph{edge-distinct} if the edge sequences of them are different. 
Similarly, two Eulerian trails are \emph{vertex-distinct} if the vertex sequences of them are different. If a graph is simple, the set of edge-distinct Eulerian trails and that of vertex-distinct Eulerian trails are equivalent. 
Thus, counting Eulerian trails is intractable for general cases unless $\P = \#\P$. 
Recently, Conte~\etal~\cite{Conte:Grossi:Loukides:Pisanti:Pissis:Punzi:2020} give an algorithm that answers whether a graph contains at least $z$ Eulerian trails in time polynomial in $m$ and $z$, where $m$ is the number of edges in the graph. 

In contrast to counting problems, \emph{enumeration problems} ask to output all the solutions without duplicates. 
Especially, enumeration problems for subgraphs satisfying some constraints have been widely studied, such as spanning trees~\cite{Shioura:Tamura:Uno:1997}, $st$-paths~\cite{Ferreira:Grossi:Rizzi:Sacomoto:Marie:2014,Birmele:Ferreira:Grossi:Marino:Pisanti:Rizzi:Sacomoto:2012}, cycles~\cite{Birmele:Ferreira:Grossi:Marino:Pisanti:Rizzi:Sacomoto:2012}, maximal cliques~\cite{Tomita:Tanaka:Takahashi:2006,Tsukiyama:Ide:Ariyoshi:Shirakawa:1977}, and many others~\cite{Wasa:2016}. 
In this paper, we focus on enumeration problems for Eulerian trails. 
As mentioned in above, because the counting version is intractable, the enumeration problem is also intractable with respect to the size of inputs. 
Hence, in this paper, we aim to develop efficient enumeration algorithms with respect to the size of \emph{both inputs and outputs}. 
Such algorithms are called \emph{output-sensitive algorithms}. 
In particular, an enumeration algorithm $\mathcal{A}$ runs in \emph{polynomial amortized time} if the total running time of $\mathcal{A}$ is $\order{poly(n)\solsize}$ time, where $n$ is the size of inputs and $\solsize$ is the number of solutions. 
That is, $\mathcal{A}$ outputs solutions in $\order{poly(n)}$ time per each on average. 
In this contexts, the ultimate goal is to develop $\order{\solsize}$ time enumeration algorithm, that is $poly(n) \in \order{1}$, and such optimal algorithms have been obtained~\cite{Uno:2015,Schweikardt:Segoufin:Vigny:2018}. 
Under this evaluation, 
Kikuchi~\cite{Kikuchi:2010} proposed an $\order{m\solsize}$ time algorithm for simple general graphs, where $m$ is the number of edges in an input graph.
However, the existence of a constant amortized time enumeration algorithm for Eulerian trails is open. 

In this paper, we propose optimal enumeration algorithms for 
edge-distinct Eulerian trails and vertex-distinct Eulerian trails based on the reverse search technique~\cite{Avis:Fukuda:1996}. 
Intuitively speaking, an enumeration algorithm based on the reverse search technique enumerates solutions by traversing on a tree-shaped search space, called the \emph{family tree}. 
Each node on the tree corresponds to a prefix of a solution called a \emph{partial solution} and each leaf corresponds to some solution. The edge set of the tree is defined by the \emph{parent-child} relation between nodes. in particular, a partial solution $P$ is the \emph{parent} of a partial solution $P'$ if $P'$ is obtained by adding one edge to $P$. 
Although our algorithm is quite simple, with a sophisticated analysis~\cite{Uno:2015} and contracting operations for graphs, we achieve constant amortized time per solution.

\section{Preliminaries}
An undirected graph $G = (V, E)$ is a pair of a vertex set $V$ and an edge set $E \subseteq V\times V$. Note that $(u, v) \in E$ if and only if $(v, u) \in E$ for each pair of vertices $u, v$. 
Graphs may contain self loops and parallel edges.
For each vertex $v$, the neighborhood $N(v, \multiplicity{v})$ is a multiset of the underlying set $N(v)$,
where $N(v)$ is the set of vertices adjacent to $v$ and $\multiplicity{v}: N(v) \to \mathbb N^+$ is a function from $N(v)$ to a positive integer.
That is, $\multiplicity{v}(u)$ represents the number of distinct edges $(v, u)$ in a graph.
The \emph{degree} $d(v)$ of a vertex $v$ is defined as $d(v) = \size{N(v, \multiplicity{v})} = \sum_{u \in N(v)} \multiplicity{v}(u)$.
A vertex $v$ is \emph{pendant} if $d(v) = 1$.
Let $\boundary{v}{G}$ be the set of edges incident to $v$.

A sequence $\pi = (v_1, e_1, v_2, e_2, \dots, v_\ell)$ is a \emph{trail} if for each $i = 1, \dots, \ell - 1$,
$e_i = (v_i, v_{i+1})$ and edges in $\pi$ are mutually distinct.
Note that some vertex appears more than once in a trail.
In particular, a trail $\pi$ is a \emph{path} if $v_i \neq v_j$ in $\pi$ for $i \neq j$. 
A \emph{circuit} is a trail such that the first vertex and the last vertex are equal. 
A trail $\pi$ is an \emph{Eulerian trail} if $\pi$ contains all the edges in $G$.
$G$ is \emph{Eulerian} if $G$ has at least one Eulerian trail.
It is known that $G$ is Eulerian if and only if either every vertex has even degree or exactly two vertices have odd degree. 
In what follows, we assume that input graphs are Eulerian. 
We define $E(\pi) = (e_1, \dots, e_{\ell-1})$ as a subsequence of $\pi$ containing all the edges in $\pi$, and 
similarly, $V(\pi) = (v_1, \dots, v_\ell)$ as a subsequence of $\pi$ containing all the vertices in $\pi$.
Let $\pi_1$ and $\pi_2$ be two Eulerian trails.
We say $\pi_1$ is \emph{edge-distinct} from $\pi_2$ if $E(\pi_1) \neq E(\pi_2)$ and 
we say $\pi_1$ is \emph{vertex-distinct} from $\pi_2$ if $V(\pi_1) \neq V(\pi_2)$.
A sequence $\pi'$ is a \emph{prefix} of $\pi$ if $\pi' = (v_1, e_1, v_2, e_2, \dots, v_{\ell'})$ for some $1 \le \ell' \le \ell$.

Assume that every Eulerian trail starts from $s$ and ends with $t$.
A trail $P$ is a \emph{partial Eulerian trail} in $G$ if $P$ starts from $s$ and there is an Eulerian trail that contains $P$ as a prefix.
Especially, if $E(G) \setminus P \neq \emptyset$, then $P$ is called a \emph{proper partial Eulerian trail}.
An edge $e$ is said to be \emph{addible} if $P+e$ is also a partial Eulerian trail. 
We denote by $t(P)$ the end vertex of a partial Eulerian trail $P$ such that $t(P) \neq s$.
For each partial Eulerian trail $P$,
we denote by $\remain{P}{G} \coloneqq (V', E(G) \setminus P)$ such that a vertex $v$ is contained in $V'$ if there is an edge in $E(G) \setminus P$ such that $v$ is incident to it.
That is, $\remain{P}{G}$ is the set of edges which we have to add to $P$ for constructing a solution.
Now, we formalize our problems as follows. 

\begin{problem}[Edge-distinct Eulerian trail enumeration]
Given a graph $G$, output all the edge-distinct Eulerian trails in $G$ without duplicate. 
\end{problem}

\begin{problem}[Vertex-distinct Eulerian trail enumeration]
Given a graph $G$, output all the vertex-distinct Eulerian trails in $G$ without duplicate. 
\end{problem}

\section{An algorithm for edge-distinct Eulerian trails}
In this section, we propose our enumeration algorithm for edge-distinct Eulerian trails based on the \emph{reverse search} technique proposed by Avis and Fukuda~\cite{Avis:Fukuda:1996}.
We remark that Eulerian trails in this section are considered as a sequence of edges. 
We enumerate solutions by traversing a directed rooted tree, called a \emph{family tree} $\mathcal{T}$.
The tree consists of nodes $\mathcal{P}$ and edges $\mathcal{E}$.
Each node $X$ in $\mathcal{P}$ is associated with a partial Eulerian trail $P$ and a graph $G$.
Let $P(X)$ be the associated partial Eulerian trail of $X$.
If no confusion arises, we identify a node with its associated a partial Eulerian trail.
The \emph{root} of $\mathcal{T}$ is the trail $R$ consisting of $s$.
For two partial Eulerian trails $P_1$ and $P_2$,
there is a directed edge $(P_1, P_2)$ in $\mathcal{E}$ if
$P_1 \subset P_2$ and $\size{P_2 \setminus P_1} = 1$.
We say that $P_1$ is the \emph{parent} of $P_2$ and $P_2$ is a \emph{child} of $P_1$.
Note that a partial Euler path may have more than one children.
Let $\children{P}$ be the set of child partial Eulerian trails of $P$.
From the definition of the parent-child relation,
for any partial Eulerian trail $P$,
there is a unique path from $P$ to the root on $\mathcal{T}$ obtained by recursively removing an edge $(u, t(P))$ in $P$.
Hence, for any node $P$ in $\mathcal{T}$, there is a path from $R$ to $P$ in $\mathcal{T}$. 

Let $\cand{P}{G} \coloneqq \boundary{t(P)}{G} \cap \remain{P}{G}$ be the candidates of addible edges $e$ to $P$ without violating the parent-child relation.
To traverse all the children of a partial Eulerian trail $P$, generating partial Eulerian trails $P+e$ is enough if $e \in \cand{P}{G}$ satisfies the condition given in the next lemma.

\begin{lemma}
    \label{lem:child:iff}
    Let $P$ be a partial Eulerian trail in $G$ and $e$ be an edge in $\remain{P}{G}$.
    Then, $P + e$ is a child of $P$ if and only if
    $\cand{P}{G} = \set{e}$ or $e \in \cand{P}{G}$ is not a bridge.
\end{lemma}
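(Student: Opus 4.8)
The plan is to prove both directions by reasoning about what it means for $P+e$ to extend to a full Eulerian trail, given that $P$ already does. The key observation is that the candidate set $\cand{P}{G}$ consists exactly of the edges incident to $t(P)$ that still remain in $\remain{P}{G}$, so adding any $e \in \cand{P}{G}$ produces a valid trail (edges stay distinct, and the new trail still starts at $s$); the only question is whether $P+e$ is \emph{partial}, i.e.\ whether it can still be completed to an Eulerian trail of $G$. Thus the whole statement reduces to a completability criterion for $\remain{(P+e)}{G}$.

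For the backward direction, I would split into the two stated cases. If $\cand{P}{G} = \set{e}$, then $e$ is the unique remaining edge at $t(P)$, and since $P$ is completable, the completion has no choice but to traverse $e$ next; hence $P+e$ is a prefix of an Eulerian trail and is a child of $P$. If $e \in \cand{P}{G}$ is not a bridge in $\remain{P}{G}$, I would argue that removing $e$ keeps the remaining graph connected on its nontrivial part and preserves the degree-parity condition needed for an Eulerian trail from the new endpoint to $t$ — this is essentially the correctness argument underlying Fleury's algorithm, where one may always safely traverse a non-bridge. The heart of this direction is showing that after deleting $e$, the graph $\remain{(P+e)}{G}$ together with the required endpoints $t(P)$'s neighbor and $t$ still admits an Eulerian trail.

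For the forward direction, I would argue contrapositively: suppose $\cand{P}{G} \neq \set{e}$ (so there is at least one other addible edge at $t(P)$) and $e$ is a bridge in $\remain{P}{G}$. I want to show $P+e$ cannot be completed, so it is not a child. The idea is that crossing a bridge $e$ at $t(P)$ while another incident edge $e'$ remains on the near side strands the edges lying on the $t(P)$-side of the bridge: any completion starting from the far endpoint of $e$ can never return across $e$ (it is already used), yet edges remain on the $t(P)$-side, so some edge can never be traversed. Making this precise means analyzing the two components of $\remain{P}{G} - e$ and checking that the endpoint constraints force unreachable edges.

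The main obstacle I expect is the careful bookkeeping of connectivity and endpoint placement in $\remain{P}{G}$, because this is a \emph{multigraph} that changes as $P$ grows, and the relevant notion of bridge must be taken inside $\remain{P}{G}$ rather than the original $G$. In particular I must handle self-loops and parallel edges correctly (a parallel edge is never a bridge, which is exactly why the two-candidate case with a non-bridge works), and I must be precise about where the trail's terminus $t$ sits relative to the bridge's two sides, since that determines which side becomes unreachable. Once the component structure of $\remain{P}{G} - e$ and the locations of $t(P)$ and $t$ are pinned down, both directions should follow from the standard existence criterion for Eulerian trails stated earlier in the preliminaries.
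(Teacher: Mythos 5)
Your proposal is correct and takes essentially the same approach as the paper: your forward direction is the paper's argument that crossing a bridge while another candidate edge remains at $t(P)$ disconnects $\remain{P+e}{G}$ and strands those edges, and your backward direction combines the forced unique edge with the Fleury-type fact that traversing a non-bridge preserves completability. If anything, you spell out the connectivity-and-parity verification for the non-bridge case more explicitly than the paper, whose proof of that direction essentially asserts it.
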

\begin{proof}
    Suppose that $P+e$ is a child of $P$.
    Assume that $\size{\cand{P}{G}} > 1$ and $e$ is a bridge.
    This implies that $G - (P + e)$ is disconnected, and thus, this contradicts the assumption that $P+e$ is a partial Eulerian trail.

    If $\cand{P}{G}$ contains exactly one edge, then the lemma clearly holds.
    Suppose that $\size{\cand{P}{G}} > 1$.
    From the definition of a partial Eulerian trail, $\cand{P}{G}$ contains at most one bridge.
    Hence, $\cand{P}{G}$ contains a non-bridge edge.
\end{proof}


Now, we give our proposed enumeration algorithm as follows. 
The algorithm starts with the root partial Eulerian trail $R$.
Then, for each edge $e = (s, u)$ in $\cand{R}{G}$,
the algorithm generates a child partial Eulerian trail $(e)$ of $R$ if $e$ meets the condition by \cref{lem:child:iff}.
The algorithm recursively generates descendant nodes of $R$ by adding edges in the candidate set by depth-first traversal on $\mathcal{T}$.
If the algorithm finds an Eulerian trail,
then the algorithm outputs it as a solution.
The correctness of the algorithm is clear from the construction.
Hence, we obtain the next theorem.
\begin{theorem}
    \cref{alg:edge:distinct} outputs all the edge-distinct Eulerian trails in a given graph once and only once.
\end{theorem}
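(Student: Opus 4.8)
The plan is to verify the two properties that make a reverse-search traversal correct: that the family tree $\mathcal{T}$ is genuinely a rooted tree, so that a depth-first search from $R$ visits each node exactly once, and that its leaves are in bijection with the edge-distinct Eulerian trails, each of which is actually reached. First I would confirm that $\mathcal{T}$ is a tree. Every non-root partial Eulerian trail $P$ has a well-defined parent obtained by deleting its last edge $(u, t(P))$: the result is a prefix of $P$, and since $P$ is a prefix of some Eulerian trail, so is this shorter prefix, hence the parent is again a legitimate node; moreover it is the unique such candidate, since removing any other edge would not leave a prefix ending at a vertex from which $P$ was extended. Each deletion strictly shortens the trail and $E(G)$ is finite, so iterating the parent map reaches $R$ in finitely many steps. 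Thus every node has a unique finite path to $R$, $\mathcal{T}$ is a rooted tree, and the DFS visits each node once.

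Next I would show that the child-generation step produces exactly $\children{P}$ for every node $P$. Only edges of $\cand{P}{G}$ can extend $P$ without violating the parent relation, and \cref{lem:child:iff} characterizes precisely which of them yield a child. Since the algorithm scans every $e \in \cand{P}{G}$ and retains exactly those meeting the condition of \cref{lem:child:iff}, the generated set coincides with $\children{P}$. It follows that $P$ is a leaf if and only if $\remain{P}{G}$ is empty, that is, iff $P$ is a complete Eulerian trail: whenever $\remain{P}{G} \neq \emptyset$, the trail $P$ extends to some Eulerian trail, whose next edge lies in $\cand{P}{G}$ and satisfies \cref{lem:child:iff}, so $P$ has at least one child.

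For completeness I would argue by induction along an arbitrary target Eulerian trail $\pi$. Every prefix of $\pi$ is a partial Eulerian trail and hence a node of $\mathcal{T}$, and the edge extending one prefix to the next belongs to the candidate set and satisfies \cref{lem:child:iff} by its forward direction, because the longer prefix is again a partial Eulerian trail. Therefore the whole chain of prefixes of $\pi$, from $R$ up to $\pi$, is a root-to-leaf path in $\mathcal{T}$, so the DFS reaches $\pi$ and outputs it. Conversely, distinct edge-distinct Eulerian trails are distinct edge sequences and thus distinct leaves; since each node is visited once, each trail is output exactly once, and nothing that is not a complete Eulerian trail is ever output.

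The only genuinely nontrivial point is the equivalence between the purely structural notion of being a child (a connectivity/bridge condition on $\cand{P}{G}$) and the global notion of being extendable to a full Eulerian trail; this is exactly what \cref{lem:child:iff} supplies, so after invoking it the remaining work is bookkeeping. Accordingly, the steps to watch are the well-definedness of the parent map, namely that deleting the last edge always leaves a partial Eulerian trail, and the claim that every proper partial Eulerian trail has a child, since together these pin down the leaves of $\mathcal{T}$ as precisely the complete Eulerian trails.
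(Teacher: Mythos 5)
Your proposal is correct and takes the same route the paper intends: the paper dismisses this theorem with ``the correctness of the algorithm is clear from the construction,'' relying on exactly the ingredients you make explicit, namely the unique parent obtained by deleting the last edge $(u, t(P))$ (which gives the tree structure of $\mathcal{T}$) and \cref{lem:child:iff} (which guarantees the algorithm generates precisely $\children{P}$ at each node, so leaves are exactly the Eulerian trails). Your write-up simply supplies the bookkeeping the paper leaves implicit, and does so without error.
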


\subsection{Amortized analysis: achieving constant amortized time }
\begin{algorithm}[t]
    \SetKwInput{KwData}{Input}
    \SetKwInput{KwResult}{Output}
    \SetKwFunction{Rec}{GenChild}
    \SetKwProg{Fn}{Function}{:}{}
    $R \gets$ the root partial Eulerian trail\; 
    Compute $\remain{R}{G}$ from $G$\; 
    \ForEach{edge $e \in \boundary{s}{G}$}{
    \If{$e$ is not a bridge in $G$}{
    \Rec{$R, e$}\; 
    }
    }
    \SetAlgoLined
    \Fn{\Rec{$P'$, $f$}}{ \label{alg:edge:distinct:rec}
    \KwData{Partial Eulerian trail $P'$, edge $f$}
        $P \gets P' + f$\; \label{alg:edge:distinct:P}
        Compute $\remain{P}{G}$ from $\remain{P'}{G}$\; \label{alg:edge:distinct:remainP}
        \eIf{$\remain{P}{G} = \emptyset$}{
            Output $P$\;
        }{
            Find bridges in $\remain{P}{G}$ by Tarjan's algorithm\; 
            \ForEach{edge $e \in \boundary{t(P)}{G} \cap \remain{P}{G}$}{
                \If{$\cand{P}{G} = \set{e}$ or $e$ is not a bridge}{
                    \Rec{$P$, $e$}\; \label{alg:edge:gen:children}
                }
            }
        }
    }
    \caption{Proposed enumeration algorithm for Eulerian trails}
    \label{alg:edge:distinct}
\end{algorithm}

The goal of this section is to develop an optimal enumeration algorithm constant amortized time per solution based on the algorithm given in the previous section.
We summarize our algorithm in \cref{alg:edge:distinct}. 
We first consider the time complexity for generating all the children of a partial Eulerian trail $P$.
Let $m_P = \size{\remain{P}{G}}$.
Finding all bridges in $\remain{P}{G}$ consumes $\order{m_P}$ time by Tarjan's algorithm~\cite{Tarjan:1974}.
Thus, by using these bridges, we can find edges satisfying the condition in \cref{lem:child:iff} in $\order{m_P}$.
In addition, $\remain{P}{G}$ can be computed in constant time by just removing $f$ from $\remain{P'}{G}$. 
Hence, the following lemma holds.

\begin{lemma}
    \label{lem:time:comp:children}
    Each recursive call at \cref{alg:edge:distinct:rec} of \cref{alg:edge:distinct} makes all the child recursive calls in $\order{m_{P}}$ time.  
\end{lemma}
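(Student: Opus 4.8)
The plan is to bound the running time of a single invocation of the function at \cref{alg:edge:distinct:rec} line by line, charging the cost of every operation \emph{excluding} the time spent inside the recursive child calls that the invocation spawns.

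First I would dispose of the constant-cost bookkeeping. \cref{alg:edge:distinct:P}, which sets $P \gets P' + f$, merely appends the single edge $f$ and updates $t(P)$, hence costs $\order{1}$. \cref{alg:edge:distinct:remainP} computes $\remain{P}{G}$ from $\remain{P'}{G}$; since $P = P' + f$ we have that $\remain{P}{G}$ is $\remain{P'}{G}$ with the single edge $f$ removed, so it suffices to delete $f$ from the data structure representing the remaining graph, which is $\order{1}$. The emptiness test $\remain{P}{G} = \emptyset$ is then $\order{1}$ as well.

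Next I would bound the two steps whose cost genuinely depends on $m_P$. Running Tarjan's bridge-finding algorithm on $\remain{P}{G}$ takes $\order{m_P}$ time and, as a byproduct, labels every edge of $\remain{P}{G}$ as a bridge or a non-bridge; thus after this step the predicate ``$e$ is not a bridge'' can be evaluated in $\order{1}$. The \textbf{foreach} loop iterates over the candidate set $\cand{P}{G} = \boundary{t(P)}{G} \cap \remain{P}{G}$. Because $\cand{P}{G} \subseteq \remain{P}{G}$, we have $\size{\cand{P}{G}} \le m_P$, and each iteration performs only the $\order{1}$ test of \cref{lem:child:iff} before possibly recursing. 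Summed over the loop, and not counting the recursive calls themselves, this contributes $\order{m_P}$.

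Adding the contributions yields $\order{m_P}$ local work per invocation, as claimed. The step I expect to require the most care is the $\order{1}$ claim on \cref{alg:edge:distinct:remainP}: it hinges on maintaining $\remain{P}{G}$ \emph{incrementally} along the current root-to-node path rather than recomputing it from scratch. This forces me to commit to a representation---for instance, doubly linked adjacency lists in which each edge stores pointers to both of its list occurrences---so that deleting $f$ upon entry and restoring $f$ upon return each take $\order{1}$; the restoration, although executed after a child returns, belongs to the present call and adds only a further $\order{1}$. A minor point is the leaf case $\remain{P}{G} = \emptyset$, where $m_P = 0$ yet an output is produced; this is subsumed by the implicit additive constant in $\order{m_P}$ and is reconciled with the amortization in the following section.
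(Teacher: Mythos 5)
Your proof is correct and takes essentially the same route as the paper: Tarjan's bridge-finding on $\remain{P}{G}$ in $\order{m_P}$ time, an $\order{m_P}$ scan of the candidate edges with $\order{1}$ bridge/singleton tests per edge, and $\order{1}$ incremental computation of $\remain{P}{G}$ by deleting the single edge $f$ from $\remain{P'}{G}$. The paper's justification is simply a terser version of this, leaving the data-structure and backtracking details implicit.
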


From \cref{lem:time:comp:children}, we need some trick to achieve constant amortized time enumeration since recursive calls near to the root recursive call needs $\order{m}$ time.
In this paper, we employ the \emph{push out amortization} technique as the trick introduced by Uno~\cite{Uno:2015}.

We first introduce some terminology.
Let $P$ be a partial Eulerian trail.
The phrase \enquote{the algorithm \emph{generates} $P$} indicates that the algorithm computes $P$ and $\remain{P}{G}$ at \cref{alg:edge:distinct:P} and \cref{alg:edge:distinct:remainP} of \cref{alg:edge:distinct}, respectively.
We denote by $T(P)$ of $P$ the exact number of computational steps for generating $P$. That is, $T(P)$ does not include the time complexity of descendants of $P$.
Let $\overline{T}(P)$ be the sum of the computation time of generating children of $P$.

Roughly speaking,
the push out amortization technique gives a sufficient condition for proving that an enumeration algorithm runs in $\order{T^*}$ per solution on average, where $T^*$ is the maximum time complexity among leaves. 
This sufficient condition is called the \emph{push out condition} defined as follows.

\begin{definition}
    Let $P$ be a partial Eulerian trail. We call the following equation the \emph{push out condition} for our problem:
    For some two constants $\alpha > 1$ and $\beta \ge 0$,
    $$\overline{T}(P) \ge \alpha T(P) -\beta (\size{\children{P}} + 1)T^*.$$
\end{definition}

Note that we slightly modify the original condition given in \cite{Uno:2015} to fit our purpose.
The intuition behind this condition is that if the computation time of each node is smaller than the sum of that of descendants, then the whole computation time is dominated by the computation time of descendants, that is, the computation time of leaves.
In our case, $T^* = \order{1}$ because $m_P$ on a leaf is constant. 
If each partial Eulerian trail has at least two children, then the condition clearly holds.
However, if there is a partial Eulerian trail $P$ has at most one child, then the condition may not hold
because for the child $P'$ of $P$,  $\remain{P}{G}$ is larger than $\remain{P'}{G}$ and thus $\overline{T}(P)$ can be smaller than $T(P)$.
Hence,
we modify $\remain{P}{G}$ by contracting vertices and edges
so that each non-leaf partial Eulerian trail $P$ has at least two children.
This \emph{contraction} of a graph is performed by applying the follow rules.
Let $v$ be a vertex in $\remain{P}{G}$.
\begin{description}
    \item[\rulePendant{}]
          If $v$ is a pendant, then remove $v$.
    \item[\ruleBridge{}]
          If $v$ is incident to exactly two distinct edges $(u, v), (v, w)$ and $t(P) \neq v$, then remove $v$ and add edge $(u, w)$ to $\remain{P}{G}$.
\end{description}

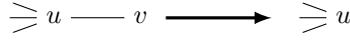
\begin{figure}[t]
    \centering
    \subfloat{
        \begin{tikzpicture}[remember picture]
            \node(EgPendantLeft){
                \begin{tikzpicture}
                    \node(u){$u$};
                    \node[right= 2em of u](v){$v$};
                    \draw (u) edge (v);
                    \node (y1) at ([shift=({160:2em})]u) {};
                    \node (y2) at ([shift=({200:2em})]u) {};
                    \node (y3) at ([shift=({180:2em})]u) {};
                    \draw (u) edge (y1);
                    \draw (u) edge (y2);
                    \draw (u) edge (y3);
                \end{tikzpicture}
            };
        \end{tikzpicture}
    }
    \hspace*{3em}
    \subfloat{
        \begin{tikzpicture}[remember picture]
            \node(EgPendantRight){
                \begin{tikzpicture}
                    \node(u){$u$};
                    \node (y1) at ([shift=({160:2em})]u) {};
                    \node (y2) at ([shift=({200:2em})]u) {};
                    \node (y3) at ([shift=({180:2em})]u) {};
                    \draw (u) edge (y1);
                    \draw (u) edge (y2);
                    \draw (u) edge (y3);
                \end{tikzpicture}
            };
        \end{tikzpicture}
    }
    \tikz[overlay,remember picture]{\draw[-latex,very thick] (EgPendantLeft)-- (EgPendantLeft-|EgPendantRight.west);} 

    \caption{Example of \rulePendant. Vertex $v$ is removed from a graph. }
    \label{fig:eg:rulependant}
\end{figure}

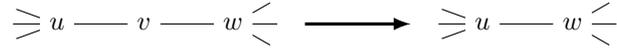
\begin{figure}[t]
    \centering
     \subfloat{
        \begin{tikzpicture}[remember picture]
            \node(EgBridgeLeft){
                \begin{tikzpicture}
                    \node(u){$u$};
                    \node[right= 2em of u](v){$v$};
                    \node[right= 2em of v](w){$w$};
                    \draw (u) edge (v); 
                    \draw (v) edge (w); 
                    \node (x1) at ([shift=({30:2em})]w) {}; 
                    \node (x2) at ([shift=({-30:2em})]w) {}; 
                    \node[label={[name=leftPic]}] (x3) at ([shift=({0:2em})]w) {}; 
                    \draw (w) edge (x1); 
                    \draw (w) edge (x2); 
                    \draw (w) edge (x3); 
                    \node (y1) at ([shift=({160:2em})]u) {}; 
                    \node (y2) at ([shift=({200:2em})]u) {}; 
                    \node (y3) at ([shift=({180:2em})]u) {}; 
                    \draw (u) edge (y1); 
                    \draw (u) edge (y2); 
                    \draw (u) edge (y3); 
                \end{tikzpicture}
                };
        \end{tikzpicture}
        }
        \hspace*{3em}
     \subfloat{
        \begin{tikzpicture}[remember picture]
            \node(EgBridgeRight){
                \begin{tikzpicture}
                    \node(u){$u$};
                    \node[right= 2em of u](w){$w$};
                    \draw (u) edge (w); 
                    \node (x1) at ([shift=({30:2em})]w) {}; 
                    \node (x2) at ([shift=({-30:2em})]w) {}; 
                    \node[label={[name=leftPic]}] (x3) at ([shift=({0:2em})]w) {}; 
                    \draw (w) edge (x1); 
                    \draw (w) edge (x2); 
                    \draw (w) edge (x3); 
                    \node (y1) at ([shift=({160:2em})]u) {}; 
                    \node (y2) at ([shift=({200:2em})]u) {}; 
                    \node (y3) at ([shift=({180:2em})]u) {}; 
                    \draw (u) edge (y1); 
                    \draw (u) edge (y2); 
                    \draw (u) edge (y3); 
                \end{tikzpicture}
                };
            \end{tikzpicture}
        }
    \tikz[overlay,remember picture]{\draw[-latex,very thick] (EgBridgeLeft)-- (EgBridgeLeft-|EgBridgeRight.west);} 
    \caption{Example of \ruleBridge. Vertex $v$ is removed. Note that the degree of other vertices are not changed. } 
    \label{fig:eg:rulebridge}
\end{figure}

See \cref{fig:eg:rulependant,fig:eg:rulebridge}.
Note that when \rulePendant{} is applied to $t$, the last vertex of Eulerian trails becomes the unique neighbor of $t$ in $\remain{P}{G}$. 
We also note that when applying \ruleBridge{}, $u = w$ may hold, that is, a loop on $u$ may be generated.
After recursively applying these operations,
we say $G$ is \emph{contracted} if neither \rulePendant{} nor \ruleBridge{} can not be applicable to $G$.
Clearly, there is a surjection from partial Eulerian trails in $G$ to ones in the contracted graph of $G$.
The next lemma shows the time complexity for obtaining contracted graphs.

\begin{lemma}
    \label{lem:time:get:child}
    Let $G$ be a graph, $P$ be a partial Eulerian trail in $G$ such that there is the parent $P'$ satisfying $P = P' + (u, v)$. 
    Assume that the contracted graph of $\remain{P'}{G}$ is already obtained.
    Then, the contraction of $\remain{P}{G}$ can be done in constant time.
\end{lemma}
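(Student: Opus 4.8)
The plan is to start from the already-contracted graph of $\remain{P'}{G}$ and to argue that passing to $\remain{P}{G}$ disturbs it only within a constant-size neighborhood. Write $u = t(P')$ and $v = t(P)$, so that $P = P' + (u,v)$ is obtained from $P'$ by consuming one edge incident to the current endpoint. Consequently $\remain{P}{G}$ is exactly $\remain{P'}{G}$ with the single edge $(u,v)$ deleted, together with the relocation of the distinguished endpoint from $u$ to $v$. First I would record the structural invariant of a contracted graph: every vertex other than the endpoint has degree at least three, since no pendant survives \rulePendant{} and no degree-two non-endpoint vertex survives \ruleBridge{}. This invariant is what keeps the re-contraction local.

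Next I would isolate the only two places where the invariant can be freshly violated. Deleting $(u,v)$ lowers the degrees of $u$ and $v$ by one, and moving the endpoint turns off the \ruleBridge{} exemption at $u$ while turning it on at $v$. Since $v$ was a non-endpoint vertex it had degree at least three, so after the deletion it still has degree at least two; as the new endpoint it is exempt from \ruleBridge{} and, not being a pendant, it triggers no rule. The vertex $u$, previously the endpoint, had degree at least two; after the deletion it has degree at least one and is no longer exempt, so it may now become a pendant (for \rulePendant{}) or a degree-two vertex (for \ruleBridge{}). Thus all repair work originates at $u$.

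The heart of the argument is to bound the resulting cascade. Here I would use the fact, recorded with \ruleBridge{}, that contracting a degree-two vertex leaves the degrees of all other vertices unchanged; hence an application of \ruleBridge{} never creates a fresh violation and is terminal. An application of \rulePendant{}, in contrast, decrements the degree of exactly one neighbor. By the degree-at-least-three invariant such a neighbor drops only to degree at least two, so it can in turn fire at most one further rule --- either a terminal \ruleBridge{}, or, in the degenerate cases created by parallel edges to $u$ or by the endpoint $v$, a single further \rulePendant{}. Tracing these possibilities shows that the chain of rule applications has constant length and remains within the neighborhood of $u$ and $v$. With an adjacency representation supporting constant-time degree queries, edge deletion, and the neighbor-merge of \ruleBridge{} (for instance, doubly linked incidence lists with paired edge pointers), each rule application costs $\order{1}$, so the entire re-contraction costs $\order{1}$.

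The main obstacle is precisely this cascade bound: one must verify that no application of the rules can propagate the degree deficit beyond a constant number of vertices. The argument hinges on the two asymmetric facts above --- that \ruleBridge{} is degree-preserving for bystanders and that \rulePendant{} lowers exactly one neighbor, which by the invariant can reach degree two at worst --- and on a careful enumeration of the boundary cases: self-loops produced by \ruleBridge{} when its two neighbors coincide, parallel edges between $u$ and $v$, the application of \rulePendant{} to the endpoint (which re-designates the endpoint, as the note following \ruleBridge{} records), and the degenerate situation in which $(u,v)$ is a self-loop so that the endpoint does not move at all. Each such case is finite and resolved in constant time, but they must all be checked to make the constant-time claim airtight.
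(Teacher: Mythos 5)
Your proof is correct and follows essentially the same route as the paper's: only $u$ and $v$ change degree, the degree invariant of contracted graphs confines any rule applications to a constant-size neighborhood of these two vertices, and each application costs $\order{1}$. In fact your treatment is more careful than the paper's own proof, which flatly asserts that every vertex of a contracted graph has degree at least three (hence that \rulePendant{} never fires) and thereby glosses over the case you handle explicitly --- the old endpoint $u = t(P')$ is exempt from \ruleBridge{} and may have degree two, so after deleting $(u,v)$ it can become a pendant and trigger the short, constant-length cascade you bound.
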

\begin{proof}
    The degree of vertices other than $u$ and $v$ is same in $\remain{P'}{G}$ and $\remain{P}{G}$.
    Note that each vertex in a contracted graph has degree at least three.
    Thus, \rulePendant{} is not applicable. 
    Moreover, \ruleBridge{} can be applied to $u$ or $v$ in $\remain{P}{G}$. 
    Hence, by checking the degree of $u$ and $v$, the contraction can be done in constant time.
\end{proof}

\begin{lemma}
    \label{lem:child:at:least:two}
    Each vertex in a contracted graph is incident to at least two non-bridge edges.
\end{lemma}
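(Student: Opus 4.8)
The plan is to reduce the statement to a fact about the bridge structure of the contracted graph $H$ (the contraction of $\remain{P}{G}$), combining the degree guarantees from the two contraction rules with a parity argument. First I would record the degree consequences of the rules: since \rulePendant{} is no longer applicable, $H$ has no pendant, so every vertex has degree at least $2$; and since \ruleBridge{} is no longer applicable, every vertex other than $t(P)$ of degree exactly $2$ would have been removed, so every vertex except possibly $t(P)$ has degree at least $3$.

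Next I would exploit the parity of $\remain{P}{G}$. Because $P$ is a partial Eulerian trail from $s$ to $t(P)$ inside an Eulerian graph whose odd-degree vertices are $s$ and $t$, a short degree count shows that in $\remain{P}{G}$ the only vertices of odd degree are $t(P)$ and $t$ (and there are none when $t(P)=t$). This parity property is preserved by contraction: \ruleBridge{} leaves the degree of every surviving vertex unchanged, while \rulePendant{} deletes an odd-degree terminal and transfers the role of that terminal to its unique neighbour. Hence in $H$ the only odd-degree vertices are the current terminals $t(P)$ and $t$.

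The key step is then to show that every bridge of $H$ separates $t(P)$ from $t$. Removing a bridge $e$ splits $H$ into two components; in each component, viewed as a standalone graph, the number of odd-degree vertices is even, and only the endpoint of $e$ has its parity changed relative to $H$. If some component contained neither $t(P)$ nor $t$, it would contain exactly one odd-degree vertex, which is impossible. Therefore each side contains one of the two odd terminals, so $e$ lies on the $t(P)$--$t$ path; consequently all bridges lie on a single path in the bridge tree, and the $2$-edge-connected components form a path $B_0,\dots,B_r$ with $t(P)\in B_0$ and $t\in B_r$, consecutive components joined by exactly one bridge.

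Finally I would finish by a case analysis on the $2$-edge-connected component $B_i$ containing an arbitrary vertex $v$. If $B_i$ is nontrivial, then $v$ has at least two edges inside $B_i$, and these are non-bridges, giving the claim. If $B_i=\set{v}$ is trivial (all edges at $v$ are bridges), the path structure forces a contradiction: an endpoint component ($v=t(P)$ or $v=t$) would make $v$ incident to a single bridge, so $d(v)=1$, contradicting the degree bound; and an interior component would make $v$ incident to exactly two bridges with $v\neq t(P)$, so $d(v)=2$ and \ruleBridge{} would still apply, contradicting that $H$ is contracted. I expect the parity argument (that every bridge separates the two odd terminals, forcing the bridges into a single path) to be the main obstacle, together with the care needed to confirm that contraction preserves this parity invariant; the remaining bookkeeping, including degenerate self-loop cases, is routine.
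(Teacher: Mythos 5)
Your proof is correct on loop-free graphs and takes a genuinely different route from the paper's. The paper works locally at a vertex $v$ incident to two bridges $b_1, b_2$: the components $B_1, B_2$ of $G - v$ hanging off these bridges must each absorb one endpoint of the Eulerian trail (a bridge cannot be recrossed), so the portion of the trail between leaving one component and entering the other is a closed trail at $v$ covering $G \setminus (B_1 \cup B_2)$, forcing at least two further edge slots at $v$; combined with the claimed minimum degree three, this gives the statement. You instead prove the global fact that every bridge separates $t(P)$ from $t$ by a handshake count on the two sides of a deleted bridge, conclude that the bridge tree is a path, and finish by a case analysis on the $2$-edge-connected component of $v$. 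Your route buys rigor that the paper's sketch lacks: the parity invariant is checked through both contraction rules, the impossibility of three bridges at one vertex falls out of the path structure for free, and you correctly observe that $t(P)$ is exempt from \ruleBridge{} (the paper's blanket claim that all vertices have degree at least three overlooks this).

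The one genuine gap is exactly the case you dismiss as routine bookkeeping: self-loops. The paper states that \ruleBridge{} may create a loop ($u = w$ is allowed), so contracted graphs can contain loops, and loops break your final dichotomy. Take the graph with vertices $s, v, t$, edges $(s,v)$ and $(v,t)$, and one loop at each vertex: it has an Eulerian trail from $s$ to $t$, and neither rule applies (each vertex is either $t(P)$, or has three distinct incident edges, or has a loop among its two distinct edges, in which case \ruleBridge{} cannot be executed). Here the $2$-edge-connected component of $v$ is trivial, yet not all edges at $v$ are bridges, $d(v) \neq 2$, and \ruleBridge{} is not applicable, so the contradiction you claim never materializes; worse, the only non-bridge edge at $v$ is its single loop, so the lemma as literally stated fails unless a loop is counted as two edge incidences. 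The same hole sits in the paper's own proof (its \enquote{at least two edges for visiting $G \setminus (B_1 \cup B_2)$} can be a single loop), so your argument is no weaker than the paper's; but a complete proof must either exclude loops or switch to counting incidences and treat the trivial-component-with-loops case explicitly---and the case matters, since a loop supplies only one child in the family tree, which is the property this lemma is meant to secure.
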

\begin{proof}
    Let $G$ be a contracted graph such that $G$ has an Eulerian trail.
    Because of \rulePendant{} and \ruleBridge{}, all vertices are incident to at least three edges.
    Suppose that a vertex $v$ incident to two distinct bridges $b_1$ and $b_2$.
    Otherwise, the lemma holds.
    Let $B_i$ be a connected component of $G - v$ such that $b_i \in B_i$ for $i = 1, 2$.
    Then, there are two classes of Eulerian trails the one contains paths first visit $B_1$ and next visit $B_2$, and the other first visits $B_2$ and next $B_1$.
    Because $G$ is Eulerian, $v$ must be incident to at least two edges for visiting all the edges in $G \setminus (B_1 \cup B_2)$.
    Hence, the statement holds.
\end{proof}

Because detecting bridges can be done in $\order{m_P}$ time~\cite{Tarjan:1974},  the time complexity of each iteration is $\order{m_P}$. 
In addition, the difference between the number of edges in $\remain{P}{G}$ and that in $\remain{Q}{G}$ of a child $Q$ of $P$ is constant, and each proper partial Eulerian trail has at least two children by \cref{lem:child:at:least:two}. 
Hence, the proposed algorithm satisfies the push out condition and runs in constant amortized time per solution \emph{without outputting}.

Next, we consider the output format of our algorithm.
If we simply output all the solutions na\"ively, then total time complexity is $\order{m\solsize}$.
Hence, to achieve constant amortized time enumeration,
we modify our algorithm so that it only outputs the difference between two consecutive solutions.
Outputting only the difference is a well known technique for reducing the total time complexity in enumeration field~\cite{Tomita:Tanaka:Takahashi:2006,Shioura:Tamura:Uno:1997}

During the execution of the algorithm, the algorithm maintains
two working memories $W$ and $\mathcal{I}$. These are implemented by doubly-linked lists.
$W$ contains the current partial Eulerian trail and $\mathcal{I}$ contains the information of contractions.
When the algorithm finds an edge $e$ such that $P+e$ is a child of $P$, then the algorithm appends $e$ to $W$.
In addition, the algorithm also appends an information $\xi$ to $\mathcal{I}$,
where $\xi$ represents how the algorithm contracts a graph.
$\xi$ forms a set of tuples each of which contains what is added and removed from $G$, and which rule is used.
Then, the algorithm prints $+(e, \xi)$.
We can easily see that the length of each output is constant.
When the algorithm backtracks from $P'$, then the algorithm removes $e$ from $W$ and $\xi$ from $\mathcal{I}$, and prints $-(e, \xi)$.
When the algorithm reaches a leaf, the algorithm prints $\#$ to indicate that a solution is stored in $W$. 
By using the information stored in the working memory, we can get the corresponding Eulerian trail if necessary.
Moreover, it is enough to allocate $\order{m}$ space to each working memory.

\cref{lem:child:iff,lem:child:at:least:two} imply that each partial Eulerian trail has either zero or at least two children. This yields the following corollary. 

\begin{corollary}
    \label{cor:num:solutions}
    The number of partial Eulerian trails generated by the algorithm is $\order{\solsize}$.
\end{corollary}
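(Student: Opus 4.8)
The plan is to exploit the structural fact, asserted just above the corollary, that in the family tree $\mathcal{T}$ every node has either zero children or at least two children. Since the algorithm generates each partial Eulerian trail exactly once during its depth-first traversal of $\mathcal{T}$, the quantity to be bounded is simply the total number of nodes of $\mathcal{T}$, and I would bound this in terms of the number of leaves.

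First I would fix notation: let $L$ denote the number of leaves of $\mathcal{T}$ and let $I$ denote the number of internal (non-leaf) nodes. By the correctness theorem for \cref{alg:edge:distinct}, each leaf corresponds to exactly one edge-distinct Eulerian trail and every solution arises as exactly one leaf, so $L = \solsize$. Every internal node is a proper partial Eulerian trail and hence, by the stated consequence of \cref{lem:child:iff,lem:child:at:least:two}, has at least two children.

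Next I would carry out a standard double counting of the edges of $\mathcal{T}$. On one hand, a rooted tree on $L + I$ nodes has exactly $L + I - 1$ edges. On the other hand, every edge of $\mathcal{T}$ joins an internal node to one of its children, so the number of edges equals the sum over internal nodes of their numbers of children, which is at least $2I$. Combining $L + I - 1 \ge 2I$ yields $I \le L - 1$, and therefore the total number of generated nodes is $L + I \le 2L - 1 = 2\solsize - 1 = \order{\solsize}$.

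Since the argument is a purely combinatorial counting bound, I do not expect a substantial obstacle; the only point requiring care is the identification $L = \solsize$, which relies on the correctness theorem guaranteeing that distinct leaves produce distinct solutions and that every solution is produced, together with the observation that the root is itself a proper partial Eulerian trail (and hence branches rather than being counted as a leaf) whenever $G$ contains an edge.
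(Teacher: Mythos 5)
Your proof is correct and follows essentially the same route as the paper: the paper derives the corollary directly from the fact (via \cref{lem:child:iff,lem:child:at:least:two}) that every node of $\mathcal{T}$ has zero or at least two children, leaving the counting implicit, while you simply spell out the standard double-counting step showing that a tree with this branching property has at most $2\solsize - 1$ nodes. No gap; your explicit treatment of $L = \solsize$ and the edge count is exactly the argument the paper takes for granted.
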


Remind that partial Eulerian trails corresponds to nodes on $\mathcal{T}$. 
Thus, by \cref{cor:num:solutions},
the total length of outputs is $\order{\solsize}$ and the next theorem holds. 

\begin{theorem}
    One can enumerate all the Eulerian trails in a graph in constant amortized time per solution with linear space and preprocess time in the size of the graph.
\end{theorem}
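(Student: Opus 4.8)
The plan is to assemble the final theorem from the three pillars already established in the excerpt: the structural guarantee that every proper partial Eulerian trail has at least two children (\cref{lem:child:at:least:two}), the constant per-node generation cost after contraction (\cref{lem:time:get:child}), and the bound on the total number of nodes in the family tree (\cref{cor:num:solutions}). First I would verify that the push out condition holds for the contracted algorithm. For any proper partial Eulerian trail $P$, I have $T(P) = \order{1}$ by \cref{lem:time:get:child}, and $\overline{T}(P) = \sum_{Q \in \children{P}} T(Q) = \order{\size{\children{P}}}$. Since $\size{\children{P}} \ge 2$ by \cref{lem:child:at:least:two}, I can choose constants $\alpha > 1$ and $\beta \ge 0$ so that $\overline{T}(P) \ge \alpha T(P) - \beta(\size{\children{P}} + 1)T^*$; concretely, because each child contributes a constant amount and there are at least two of them, the aggregate children cost dominates the parent cost by a multiplicative factor bounded away from $1$. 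Here $T^* = \order{1}$ since $m_P$ is constant at any leaf.

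Next I would invoke the push out amortization theorem of Uno to conclude that the \emph{generation} phase — running the recursion without emitting output — takes $\order{T^* \cdot \solsize} = \order{\solsize}$ total time. This uses \cref{cor:num:solutions}, which guarantees the family tree has $\order{\solsize}$ nodes, so that the amortized cost per solution collapses to a constant. I would then treat the output separately, since naive printing of each full trail would reintroduce an $\order{m}$ factor per solution. The resolution, already sketched in the excerpt, is to emit only the differences between consecutive partial trails: a constant-length token $+(e,\xi)$ on each downward edge and $-(e,\xi)$ on each backtrack, plus a marker $\#$ at each leaf. Since each tree edge is traversed exactly twice (once descending, once backtracking) and the tree has $\order{\solsize}$ edges, the total output length is $\order{\solsize}$, matching the generation time.

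Finally I would address the space and preprocessing bounds. The working memories $W$ and $\mathcal{I}$ are doubly-linked lists whose length never exceeds the current trail length, hence $\order{m}$ space suffices; the push/pop operations on backtracking are $\order{1}$ each and are already charged to the generation cost. Preprocessing consists of computing $\remain{R}{G}$ and the initial contracted graph, together with the first bridge computation, all of which are $\order{m}$ by Tarjan's algorithm. Combining the $\order{\solsize}$ generation time, the $\order{\solsize}$ output time, the $\order{m}$ space, and the $\order{m}$ preprocessing yields the theorem.

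I expect the main obstacle to be the careful bookkeeping in verifying the push out condition with the correct constants, in particular reconciling the $-\beta(\size{\children{P}}+1)T^*$ slack term with the fact that interior nodes and leaves have structurally different costs — one must confirm that the constant difference in $\size{\remain{\cdot}{G}}$ between a node and its child (\cref{lem:time:get:child}) does not accumulate across long chains, which is precisely what the $\size{\children{P}} \ge 2$ branching rules out. The remaining steps are essentially a routine application of the cited amortization framework together with the standard difference-encoding trick for output.
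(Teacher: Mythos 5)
There is a genuine gap at the core of your amortization argument. You claim that $T(P) = \order{1}$ for every node, citing \cref{lem:time:get:child}. That lemma only says the \emph{contraction} step (applying \rulePendant{} and \ruleBridge{}) takes constant time given the parent's contracted graph. It does not bound the full cost of generating a node: each recursive call must also find all bridges in $\remain{P}{G}$ by Tarjan's algorithm in order to test the condition of \cref{lem:child:iff}, and this costs $\order{m_P}$, which can be $\order{m}$ near the root (this is exactly \cref{lem:time:comp:children}). If $T(P)$ really were constant per node, the entire push out machinery would be superfluous: the total time would trivially be $\order{\solsize}$ by \cref{cor:num:solutions} alone, and your ``verification'' of the push out condition would be vacuous. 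The fact that your argument makes Uno's technique unnecessary is the symptom that the premise is wrong.

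The paper's actual argument is different in precisely this respect: $T(P) = \order{m_P}$, and the push out condition is satisfied because (i) the remaining graph of each child $Q$ has only a constant number of edges fewer than that of $P$, so $T(Q) \ge c(m_P - c')$ for constants $c, c'$, and (ii) every non-leaf node has at least two children by \cref{lem:child:at:least:two} combined with \cref{lem:child:iff}. Together these give $\overline{T}(P) \ge 2c(m_P - c') \ge \alpha T(P) - \beta(\size{\children{P}}+1)T^*$ for suitable $\alpha > 1$, $\beta \ge 0$; the branching factor of two is what compensates for the fact that each child is nearly as expensive as its parent, not (as in your write-up) for costs that are already constant. Your treatment of the output encoding (constant-length difference tokens $+(e,\xi)$, $-(e,\xi)$, $\#$, with total output length $\order{\solsize}$ via \cref{cor:num:solutions}) and of the $\order{m}$ space and preprocessing bounds does match the paper and is fine; it is the cost model feeding the push out condition that needs to be repaired.
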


\section{Algorithm for vertex-distinct Eulerian trails}
In this section, we focus on an enumeration of vertex-distinct Eulerian trails.
In a way similar to the previous section,
we develop an enumeration algorithm based on the reverse search technique.
We first some terminologies. 
A pair $(u, v)$ of vertices is \emph{good} if there are two parallel edges $e_1$ and $e_2$ between $u$ and $v$ such that the corresponding trail of $e_1$ in the original graph differs from that of $e_2$. Otherwise, we say $(u, v)$ is \emph{bad}. 
Note that we can check whether $(u, v)$ is good or bad in constant time.

The main difference from the edge-distinct problem is that
adding parallel edges between a bad pair to a current partial Eulerian trail may generate duplicate solutions. 
Note that two parallel edges between a good pair may yield two children if these parallel edges is generated by \ruleBridge{}. 
This is because an added edge by \ruleBridge{} carries two or more edges. 
To ensure that each proper partial Eulerian trail has at least two children, that is, each vertex $v$ has at least two distinct edges,  we slightly modify an input graph of each recursive call as follows.

\begin{description}
    \item[\ruleMod]
          If $v$ has exactly two neighbors $u$ and $w$ such that $\multiplicity{v}(u) = 1$, $(v, w)$ is bad, $u$ and $w$ are distinct from $v$, and $v$ has no loops, then remove $\multiplicity{v}(w)-1$ edges between $v$ and $w$.
          Then, add a vertex $v'$ and $\multiplicity{v}(w) - 1$ edges between $v'$ and $w$.
    \item[\ruleBigPendant]
          Remove incident edges of a vertex $v$
          if $v$ has no loops, $v$ has exactly one neighbor $u$ such that  $(u, v)$ is bad and $\multiplicity{v}(u) \ge 3$. 
          Then, add $\floor{\frac{\multiplicity{v}(u)}{2}}$ loops to $u$. In addition, add an edge $(u, v)$ if $\multiplicity{v}(u)$ is odd.
\end{description}


\begin{figure}[t]
    \centering
    \subfloat{
        \begin{tikzpicture}[remember picture]
            \node(EgModLeft){
                \begin{tikzpicture}
                    \node(u){$u$};
                    \node[right= 2em of u](v){$v$};
                    \node[right= 2em of v](w){$w$};
                    \draw (u) edge (v); 
                    \draw (v) edge[me=5] (w); 
                    \node (x1) at ([shift=({30:2em})]w) {}; 
                    \node (x2) at ([shift=({-30:2em})]w) {}; 
                    \node[label={[name=leftPic]}] (x3) at ([shift=({0:2em})]w) {}; 
                    \draw (w) edge (x1); 
                    \draw (w) edge (x2); 
                    \draw (w) edge (x3); 
                    \node (y1) at ([shift=({160:2em})]u) {}; 
                    \node (y2) at ([shift=({200:2em})]u) {}; 
                    \node (y3) at ([shift=({180:2em})]u) {}; 
                    \draw (u) edge (y1); 
                    \draw (u) edge (y2); 
                    \draw (u) edge (y3); 
                \end{tikzpicture}
            };
        \end{tikzpicture}
    }
    \hspace*{3em}
    \subfloat{
        \begin{tikzpicture}[remember picture]
            \node(EgModRight){
                \begin{tikzpicture}
                    \node[label={[name=rightPic]}](u){$u$};
                    \node[right= 2em of u](v){$v$};
                    \node[right= 2em of v](w){$w$};
                    \node[above= 2em of w](vp){$v'$};
                    \draw (u) edge (v); 
                    \draw (w) edge (v); 
                    \draw (vp) edge[me=4] (w); 
                    \node (x1) at ([shift=({30:2em})]w) {}; 
                    \node (x2) at ([shift=({-30:2em})]w) {}; 
                    \node (x3) at ([shift=({0:2em})]w) {}; 
                    \draw (w) edge (x1); 
                    \draw (w) edge (x2); 
                    \draw (w) edge (x3); 
                    \node (y1) at ([shift=({160:2em})]u) {}; 
                    \node (y2) at ([shift=({200:2em})]u) {}; 
                    \node (y3) at ([shift=({180:2em})]u) {}; 
                    \draw (u) edge (y1); 
                    \draw (u) edge (y2); 
                    \draw (u) edge (y3); 
                \end{tikzpicture}
            };
        \end{tikzpicture}
    }
    
    \tikz[overlay,remember picture]{\draw[-latex,very thick] (EgModLeft)-- (EgModLeft-|EgModRight.west);} 
    \caption{Example of \ruleMod. After applying \ruleMod{}, \ruleBridge{} and \ruleBigPendant{} can be applied to the resultant graph. } 
    \label{fig:eg:rulemod}
\end{figure}
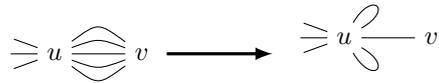
\begin{figure}[t]
    \centering
    \subfloat{
        \begin{tikzpicture}[remember picture]
            \node(EgBigPendantLeft){
                \begin{tikzpicture}
                    \node(u){$u$};
                    \node[right= 2em of u](v){$v$};
                    \draw (u) edge[me=5] (v); 
                    \node (y1) at ([shift=({160:2em})]u) {}; 
                    \node (y2) at ([shift=({200:2em})]u) {}; 
                    \node (y3) at ([shift=({180:2em})]u) {}; 
                    \draw (u) edge (y1); 
                    \draw (u) edge (y2); 
                    \draw (u) edge (y3); 
                \end{tikzpicture}
            };
        \end{tikzpicture}
    }
    \hspace*{3em}
    \subfloat{
        \begin{tikzpicture}[remember picture]
            \node(EgBigPendantRight){
                \begin{tikzpicture}[every loop/.style={}]
                    \node(u){$u$};
                    \node[right= 2em of u](v){$v$};
                    \draw (u) edge (v); 
                    \draw (u) edge[in=60, out=30, loop] (u); 
                    \draw (u) edge[in=-30, out=-60, loop] (u); 
                    \node (y1) at ([shift=({160:2em})]u) {}; 
                    \node (y2) at ([shift=({200:2em})]u) {}; 
                    \node (y3) at ([shift=({180:2em})]u) {}; 
                    \draw (u) edge (y1); 
                    \draw (u) edge (y2); 
                    \draw (u) edge (y3); 
                \end{tikzpicture}
            };
        \end{tikzpicture}
    }
    \tikz[overlay,remember picture]{\draw[-latex,very thick] (EgBigPendantLeft)-- (EgBigPendantLeft-|EgBigPendantRight.west);} 
    \caption{Example of \ruleBigPendant. After applying this rule, \rulePendant{} may be applicable. Note that the degree of $u$ is not changed. }
    \label{fig:eg:rulebigpendant}
\end{figure}

See \cref{fig:eg:rulemod,fig:eg:rulebigpendant}. 
Note that after performing \ruleMod{}, the number of edges in the resultant graph $G'$ and the original graph $G$ are same, and \ruleBridge{} is applicable to $v$ in $G'$.
In addition, after performing \ruleBigPendant{}, \rulePendant{} may be applicable.  
When \ruleMod{} is applicable to $G$,
there are (a) at least one solution containing $(u, v, w)$ or $(w, v, u)$ and (b) at least one solutions containing $(w, v, w)$. 
This implies that if $t(P) = v$ and \ruleMod{} is applicable, then we can generate two partial Eulerian trails which are extended by (a) and (b). 
In addition, we can easily obtain a one-to-one corresponding between solutions in $G$ and $G'$,  which $G'$ is obtained by applying \ruleMod{} or \ruleBigPendant{}.

However, applying \ruleMod{} and \ruleBigPendant{} may make $G$ drastically small.
Let $(v_1, \dots, v_\ell, v_{\ell+1} = v_1)$ be a vertex sequence such that for each $i = 2, \dots, \ell$, $N(v_i) = \set{v_{i-1}, v_{i+1}}$ and $\multiplicity{v_i}(v_{i-1}) = \multiplicity{v_i}(v_{i+1}) = 2$. For each $v_i$, if $t(P) = v_i$, then $P$ has two children. However, after adding an edge incident to $v_i$ to $P$, $v_j$ has at most one child for $j = 2, \dots, \ell-1$. If an input graph forms this sequence, then after contracting by \ruleMod{} and \rulePendant{}, the size of the resultant graph becomes constant.
This shrinking prevents the algorithm from satisfying the push out condition.  
To avoid such a situation, we introduce another rule.


\begin{figure}[t]
    \centering
    \sbox{\measurebox}{%
        \begin{minipage}[c][3cm]{.33\textwidth}
            \subfloat[]
            {\label{fig:figA}
                \begin{tikzpicture}[remember picture]
                    \node(EgGenLeft){
                    \begin{tikzpicture}
                        \node(u){$u$};
                        \node[right= 2em of u](v){$v$};
                        \node[right= 2em of v](w){$w$};
                        \draw (u) edge[me=2] (v);
                        \draw (v) edge[me=2] (w);
                        \node (x1) at ([shift=({20:2em})]w) {};
                        \node (x2) at ([shift=({-20:2em})]w) {};
                        \node (x3) at ([shift=({0:2em})]w) {};
                        \draw (w) edge (x1);
                        \draw (w) edge (x2);
                        \draw (w) edge (x3);
                        \node (y1) at ([shift=({160:2em})]u) {};
                        \node (y2) at ([shift=({200:2em})]u) {};
                        \node (y3) at ([shift=({180:2em})]u) {};
                        \draw (u) edge (y1);
                        \draw (u) edge (y2);
                        \draw (u) edge (y3);
                    \end{tikzpicture}
                };
                \end{tikzpicture}
            }
        \end{minipage}}
    \usebox{\measurebox}\quad
    \begin{minipage}[b][\ht\measurebox][s]{.33\textwidth}
        \centering
        \subfloat[]
        {\label{fig:figB}
            \begin{tikzpicture}[remember picture]
                    \node(EgGenRUp){
                    \begin{tikzpicture}
                \node(u){$u$};
                \node[right= 2.8em of u](w){$w$};
                \draw (u) edge[me=2] (w);
                \node (x1) at ([shift=({20:2em})]w) {};
                \node (x2) at ([shift=({-20:2em})]w) {};
                \node (x3) at ([shift=({0:2em})]w) {};
                \draw (w) edge (x1);
                \draw (w) edge (x2);
                \draw (w) edge (x3);
                \node (y1) at ([shift=({160:2em})]u) {};
                \node (y2) at ([shift=({200:2em})]u) {};
                \node (y3) at ([shift=({180:2em})]u) {};
                \draw (u) edge (y1);
                \draw (u) edge (y2);
                \draw (u) edge (y3);
                \end{tikzpicture}
            };
            \end{tikzpicture}
        }

        \vfill

        \subfloat[]
        {\label{fig:figC}
            \begin{tikzpicture}[remember picture]
                    \node(EgGenRDown){
                    \begin{tikzpicture}[every loop/.style={}]
                \node(u){$u$};
                \node[right= 2.8em of u](w){$w$};
                \draw (u) edge [loop right] (u);
                \draw (w) edge [loop left] (w);
                \node (x1) at ([shift=({20:2em})]w) {};
                \node (x2) at ([shift=({-20:2em})]w) {};
                \node (x3) at ([shift=({0:2em})]w) {};
                \draw (w) edge (x1);
                \draw (w) edge (x2);
                \draw (w) edge (x3);
                \node (y1) at ([shift=({160:2em})]u) {};
                \node (y2) at ([shift=({200:2em})]u) {};
                \node (y3) at ([shift=({180:2em})]u) {};
                \draw (u) edge (y1);
                \draw (u) edge (y2);
                \draw (u) edge (y3);
                \end{tikzpicture}
            };
            \end{tikzpicture}
        }
    \end{minipage}
    \tikz[overlay,remember picture]{\draw[-latex,very thick] (EgGenLeft)-- (EgGenLeft-|EgGenRUp.west);} 
    \tikz[overlay,remember picture]{\draw[-latex,very thick] (EgGenLeft)-- (EgGenLeft-|EgGenRDown.west);} 
    \caption{Example of \ruleGenTwoGraphs{}. This modification yields one or two graphs such thse are obtained by replacing (a) with (b), and (a) with (c). }
    \label{fig:eg:rulegentwographs}
\end{figure}
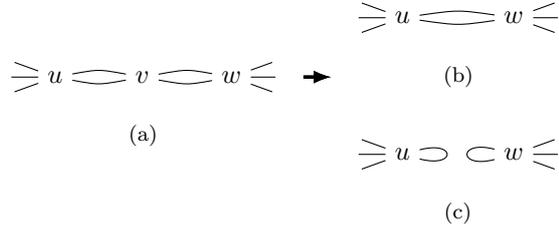
\begin{description}
    \item[\ruleGenTwoGraphs]
          Let $v$ be a vertex adjacent exactly two distinct neighbors $u, w$ such that (1) $t(P) \neq v$, (2) $v$ has no loops, (3) $\multiplicity{v}(u) = \multiplicity{v}(w) = 2$, and (4) both $(u, v)$ and $(v, w)$ are bad.
          Then make a copy $G_1$ of an input graph $G$, and remove $v$ from $G_1$ and add two edges between $u$ and $w$.
          In addition, if $v$ is not a cut in $G$,
          then make an additional copy $G_2$ of an input graph $G$, and
          remove $v$ from $G_2$ and add self loops $(u, u)$ and $(w, w)$.
\end{description}

This rule comes from the following observation: each solution in $G$ contains either (1) two $(u, v, w)$ or (2) $(u, v, u)$ and $(w, v, w)$. Note that if $v$ is a cut, then all the solutions contain the former subtrails. Moreover, the set of solutions in $G_1$ and $G_2$ are disjoint.
Hence, if there is such a vertex $v$, then the algorithm replace a child $Q$ of $P$ with two children $Q_1$ and $Q_2$ whose associated graphs are $G_1$ and $G_2$, respectively.

In this section,
we say that a graph  $G$ is \emph{contracted} if \rulePendant{} to \ruleBigPendant{} can not be applicable to $G \setminus P$.
If a graph is contracted, then we show the key lemmas in this section.

\begin{lemma}
    \label{lem:v:time:contraction}
    Let $G$ be a contracted graph and $P$ be a partial Eulerian trail in $G$.
    Suppose that $\remain{P'}{G}$ of the parent $P'$ of $P$ is already obtained. 
    Then, contracting $\remain{P}{G}$ can be done in constant time. 
\end{lemma}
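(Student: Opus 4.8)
The plan is to mirror the argument of \cref{lem:time:get:child} from the edge-distinct case, but now accounting for the four contraction rules (\rulePendant{} through \ruleBigPendant{}) rather than just the first two. The key structural fact I would exploit is that passing from $\remain{P'}{G}$ to $\remain{P}{G}$ removes exactly one edge, namely the edge $(u,v)$ that was appended to form $P = P' + (u,v)$. Consequently, the only vertices whose degree or neighborhood multiplicities change are $u$ and the new endpoint $t(P) = v$. Every other vertex retains exactly the same incident-edge structure it had in $\remain{P'}{G}$, and since $\remain{P'}{G}$ was already contracted, none of the four rules can fire at any such unchanged vertex. This localizes all the work to a constant-sized neighborhood.

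\textbf{The main steps, in order.} First I would argue that no rule is newly triggered away from $u$ and $v$: because contractedness of $\remain{P'}{G}$ means every vertex already had degree at least three and none satisfied the good/bad multiplicity conditions of \ruleMod{} or \ruleBigPendant{}, and removing $(u,v)$ cannot decrease the degree of a third vertex or alter whether a pair incident to it is good or bad. Second, I would examine $u$ and $v$ individually: the deletion drops each of their degrees by one (or by the multiplicity adjustment, in the case of a parallel edge), so one of \rulePendant, \ruleBridge, \ruleMod, or \ruleBigPendant may now become applicable precisely at $u$ or at $v$. Since determining which rule applies only requires inspecting $d(u)$, $d(v)$, the (constantly many) relevant multiplicities, and the good/bad status of the at most two surviving neighbors — all checkable in $\order{1}$ by the remark following the definition of good/bad pairs — the applicable rule can be identified in constant time. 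Third, I would verify that \emph{performing} the identified rule is itself constant-time: \rulePendant{} and \ruleBridge{} manipulate $\order{1}$ edges as in the edge-distinct case, and \ruleMod{} and \ruleBigPendant{} each add or remove a bounded number of vertices, loops, and edges at $u$ or $v$, so each single application touches only $\order{1}$ list cells of the doubly-linked structures.

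\textbf{The subtle point}, and where I expect the real obstacle, is that these rules can \emph{cascade}: the remark preceding the lemma explicitly notes that after \ruleMod{} fires, \ruleBridge{} becomes applicable, and after \ruleBigPendant{} fires, \rulePendant{} may become applicable. So a single edge deletion can set off a short chain of rule applications rather than just one. I would therefore need to bound the length of this cascade by an absolute constant. The key observation is that each rule in the chain strictly reduces the number of vertices in the affected neighborhood while keeping the disturbance confined to $u$, $v$, and their at most two surviving neighbors; once a vertex has been processed and removed or stabilized to degree at least three, it cannot be reactivated by the deletion of a single edge elsewhere. Hence the cascade has length $\order{1}$, and since each link costs $\order{1}$ time by the previous step, the total contraction cost is $\order{1}$, which establishes the lemma. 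I would close by noting that the bookkeeping information $\xi$ recording these $\order{1}$ modifications is likewise of constant size, consistent with the output format described earlier.
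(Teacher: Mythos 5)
Your overall plan---localize the effect of deleting $(u,v)$ to a constant-radius region, check and apply each rule in $\order{1}$ time, and bound the length of the resulting cascade---is the same skeleton as the paper's proof, and you correctly identify the cascade as the crux. The gap is in how you discharge that crux. The paper's proof consists precisely of an explicit case analysis on the local configuration at $v$ (number of distinct neighbors and the multiplicities $\multiplicity{v}(\cdot)$), and symmetrically at $u$, showing in each case which rules can fire and that the chain stops after $\order{1}$ applications. Your substitute invariant does not hold as stated. First, \enquote{each rule in the chain strictly reduces the number of vertices in the affected neighborhood} is false: \ruleMod{} \emph{adds} a vertex $v'$, so termination cannot be argued by vertex count. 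Second, \enquote{once a vertex has been \ldots stabilized to degree at least three, it cannot be reactivated} uses degree as the stability criterion, but unlike the edge-distinct setting of \cref{lem:time:get:child}, the rules \ruleMod{} and \ruleBigPendant{} are triggered by multiplicity, distinct-neighbor-count, and good/bad conditions at vertices of degree at least three, and a rule application can change exactly these quantities at a vertex whose degree never changes. Concretely, \ruleBridge{} applied at $v$ replaces the two edges $(a,v),(v,b)$ by one edge $(a,b)$; if $a$ and $b$ were already adjacent, then $a$ loses a distinct neighbor and $\multiplicity{a}(b)$ grows, so \ruleMod{} may newly become applicable at $a$ even though $d(a)$ is untouched, which in turn enables \ruleBridge{} at $a$, perturbing $a$'s other neighbor in the same way---a potential chain your invariant does not rule out.

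Your confinement claim is also too tight even by the paper's own account: in the paper's Case~A.2 the cascade fires at a neighbor $w_2$ of $v$ and its applicability test involves $w_2$'s neighbor $w_3$, i.e., the disturbance reaches distance two from the deleted edge. What actually stops the propagation sketched above is not a counting or degree argument but the good/bad bookkeeping: an edge created by \ruleBridge{} carries a trail of two or more original edges, so parallel pairs it creates are generally good, and \ruleMod{} (which requires a \emph{bad} pair) does not fire on them. This is configuration-specific reasoning of exactly the kind the paper's case analysis supplies and your abstract invariant cannot. To complete your proof you would need either to carry out that case analysis or to formulate and prove a correct potential function that decreases under every rule application; the localization observations alone do not yield the lemma.
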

\begin{proof}
Let $e = (u, v)$ be an edge such that $P = P' + e$.
Firstly, removing $e$ from $\remain{P'}{G}$ can be done in constant time. 
We consider that which rule can be applied to $\remain{P}{G}$.  
Because $\remain{P'}{G}$ is contracted, we have a case analysis with respect to the multiplicity $\multiplicity{v}$ of edges on $\remain{P'}{G}$ and the number of neighbors of $v$ as follows; 

\begin{description}
    \item[Case A: $v$ has at least three distinct neighbors $u, w_1, w_2$] \mbox{}\\
          If $\multiplicity{v}(u) > 1$ or more than three distinct neighbors, then no rules can be applicable to $v$.
          We assume that $v$ has exactly three distinct neighbors $u, w_1, w_2$ such that $\multiplicity{v}(u) = 1$. 
          \begin{description}
              \item[A.1: $\multiplicity{v}(w_1) \ge 2$ or $\multiplicity{v}(w_2) \ge 2$ ] \mbox{}\\
                    We can not apply any rules to $w_1$ and $w_2$ after adding $e$ to $P'$.
              \item[A.2: $\multiplicity{v}(w_1) = 1$ and $\multiplicity{v}(w_2) \ge 1$]  \mbox{} \\
                    We first apply \ruleMod{} and \ruleBridge{} to $v$. Then, possibly, we also apply \rulePendant{}, \ruleBridge, and \ruleBigPendant{} to the copy of $v$.
                    If $w_2$ is incident to $w_3$ such that $\multiplicity{w_2}(w_3) = 1$, then \ruleBridge{} can be applied to $w_2$. However, no more rule can be applied to neither $w_1$ nor $x$.
          \end{description}
    \item[Case B: $v$ has exactly two distinct neighbors $u, w$]\mbox{}\\
          \begin{description}
              \item[B.1: $\multiplicity{v}(u) \ge 3$] \mbox{}\\
                    If $\multiplicity{v}(u) = 3$ and $\multiplicity{v}(w) = 2$, then this case is the same as case A.1. 
                    Otherwise, we can not apply any rules to $v$ and $w$ after adding $e$ to $P'$.
              \item[B.2: $\multiplicity{v}(u) = 2$ ] \mbox{}\\
                    This case is the same as case A.2 after adding $e$ to $P'$.
              \item[B.3: $\multiplicity{v}(u) = 1$ ] \mbox{}\\
                    In this case, \ruleMod{} can be applied to $v$ on $\remain{P'}{G}$. 
                    Because $\remain{P'}{G}$ is contracted, this derives a contradiction. 
                    Thus, this case does not happen. 
          \end{description}
\end{description}
    We also have the same cases for $u$. Thus, from the above observation, the number of applying contracting rules is constant.
    Moreover, each rule can be proceeded in constant time.
    Therefore, the lemma holds.
\end{proof}

Let $P' = P + e$ be the parent of $P$.  Assume that $\remain{P}{G}$ is obtained by removing $e$ and repeatedly applying \rulePendant{} to \ruleBigPendant{} to $\remain{P'}{G}$ .
Assume that $\remain{P'}{G}$ has no vertex to which \ruleGenTwoGraphs{} can be applied. 
This yields the number of vertices to which \ruleGenTwoGraphs{} is applicable in $\remain{P}{G}$ is constant. 
Moreover, when making children of $P$, we avoid to generate all the whole copies of $\remain{P}{G}$ by applying \ruleGenTwoGraphs{} to reduce the amount of space usage. In stead of whole copying, we locally modify $v$ that is a target of \ruleGenTwoGraphs{}. 
Hence, the following clearly holds. 

\begin{lemma}
\label{lem:time:gen:two:inputs}
    Let $G$ be a graph and $P$ be a partial Eulerian trail in $G$.
    Suppose that $\remain{P}{G}$ is contracted. 
    Then, children of $P$ that are generated by applying \ruleGenTwoGraphs{} to $\remain{P}{G}$ can be obtained in constant time. 
\end{lemma}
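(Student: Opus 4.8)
The plan is to assemble the statement from a structural bound on the number of \ruleGenTwoGraphs{}-eligible vertices together with the local nature of the modification, using \cref{lem:v:time:contraction} as the engine for the recontraction cost. Recall the inductive setting: we process $P = P' + e$ with $e = (u,v)$ from its parent $P'$, whose contracted remaining graph $\remain{P'}{G}$ is already in hand and, by hypothesis, contains no vertex to which \ruleGenTwoGraphs{} applies. First I would argue that passing from $\remain{P'}{G}$ to the contracted $\remain{P}{G}$ can create only a constant number of new \ruleGenTwoGraphs{}-eligible vertices. By \cref{lem:v:time:contraction}, removing $e$ and reapplying the contraction rules \rulePendant{}--\ruleBigPendant{} disturbs only a constant-size neighborhood of $u$ and $v$: the degree, multiplicity profile, loop status, and good/bad classification of every other vertex are unchanged. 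Since the four conditions defining \ruleGenTwoGraphs{} (two distinct neighbors, no loop, both incident multiplicities equal to two, both pairs bad, and $t(P) \neq v$) are entirely local to a vertex and its two neighbors, no vertex outside this constant-size region can newly satisfy them; hence the eligible set has constant size.

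Next I would show that each eligible vertex yields its children in constant time. Rather than materializing the whole copies $G_1$ and $G_2$, the algorithm edits $\remain{P}{G}$ in place on its doubly-linked-list representation: for the $G_1$ branch it deletes $v$ together with its four incident edges and inserts two parallel $u$--$w$ edges, and for the $G_2$ branch it instead records the two loops $(u,u)$ and $(w,w)$. Each such branch is a constant number of pointer updates, and, exactly as for the output scheme of the edge-distinct algorithm, the edits are logged so that backtracking undoes them in constant time and the two children share storage with the parent. Combining the constant bound on the eligible set with the constant cost per branch gives the claimed constant total time.

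The step I expect to be the genuine obstacle is the test \enquote{$v$ is not a cut} that decides whether the $G_2$ branch is created. A naive articulation check per eligible vertex costs $\order{m_P}$ and would defeat the constant-time claim. The resolution I would use is that the node already spends $\order{m_P}$ time detecting bridges in $\remain{P}{G}$ by Tarjan's depth-first search, and the very same traversal computes low-link values and hence all articulation points at no extra asymptotic cost; the \ruleGenTwoGraphs{} step then only reads, in constant time per eligible vertex, whether that vertex is flagged as a cut. With this precomputed cut information in place, every remaining operation of the rule is local and constant-time, so the lemma follows. A minor point to verify along the way is bookkeeping consistency: the inserted loops and parallel edges must keep the good/bad labels and the incremental contraction invariants intact, so that \cref{lem:v:time:contraction} remains applicable to the children.
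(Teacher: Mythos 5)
Your proposal matches the paper's own (very terse) argument: the paper likewise assumes inductively that $\remain{P'}{G}$ contains no \ruleGenTwoGraphs{}-eligible vertex, concludes from locality of the contraction rules that only a constant number of vertices of $\remain{P}{G}$ can become eligible, and avoids whole-graph copies by locally modifying the target vertex $v$, declaring the lemma to "clearly hold." You additionally address one point the paper silently ignores---the cost of testing whether $v$ is a cut vertex---and your resolution (computing articulation points in the same Tarjan DFS that already finds bridges in $\order{m_P}$ time per node, so the marginal cost of the rule stays constant) is sound and consistent with the paper's overall push-out accounting.
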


From the above discussion, we can obtain the following key lemma. 

\begin{lemma}
    A proper partial Eulerian trail $P$ of a contracted graph $G$ has at least two children.
    \label{lem:v:pet:has:two:children}
\end{lemma}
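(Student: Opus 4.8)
The plan is to fix the current endpoint $v = t(P)$ and to show, by a case analysis on the edges of $\remain{P}{G}$ incident to $v$, that $P$ always admits two children whose vertex sequences differ. The starting point is \cref{lem:child:at:least:two}, which already guarantees that $v$ is incident to at least two non-bridge edges in the contracted graph; moreover, by \cref{lem:child:iff} a bridge in $\cand{P}{G}$ is never a child once $\size{\cand{P}{G}} > 1$, so the two surviving child-generating extensions at $v$ are non-bridge edges. Consequently, the only way $P$ could fail to have two children is for all of these extensions to produce the \emph{same} vertex sequence, and this can happen only through a bad pair at $v$.

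I would then split into the regimes already used in \cref{lem:v:time:contraction}, according to the number of distinct neighbors of $v$ and the multiplicities $\multiplicity{v}$. If the two non-bridge edges at $v$ reach two distinct neighbors, the extensions differ already in the next vertex and give two vertex-distinct children. If instead the two non-bridge edges are parallel edges to a single neighbor $w$ with $(v, w)$ \emph{good}, then by definition $e_1$ and $e_2$ carry different subtrails, so $P + e_1$ and $P + e_2$ are vertex-distinct children; this is exactly the good-pair-through-\ruleBridge{} situation noted before the lemma.

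The remaining, and hardest, case is when every surviving extension of $v$ leads to the same bad neighbor $w$, i.e. $\cand{P}{G}$ consists of parallel edges to $w$ with $(v,w)$ bad (plus possibly bridges, which \cref{lem:child:iff} rules out as children). Here I would exploit contractedness \emph{at $w$}: since \ruleMod{} and \ruleBigPendant{} cannot fire at $w$, the neighborhood of $w$ cannot be the degenerate one that would collapse all continuations, so $w$ must either expose a second distinct neighbor downstream or meet the hypotheses of \ruleGenTwoGraphs{}. In the latter subcase I would invoke \cref{lem:time:gen:two:inputs}: the single child $Q = P + e$ is replaced by the two children $Q_1, Q_2$ whose associated graphs are the two copies produced by \ruleGenTwoGraphs{}, and since the solution sets of these copies are disjoint, $Q_1$ and $Q_2$ are genuinely distinct children. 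Combining the cases yields at least two children.

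The main obstacle will be precisely this bad-pair bookkeeping: one must verify that the classification good/bad interacts correctly with the bridge condition of \cref{lem:child:iff} and with the multiplicities left after the edge used to arrive at $v$, so that a bad pair at $v$ can never leave $P$ with a single surviving child. The key leverage is that parallel edges created by \ruleBridge{} are automatically good, since they carry different intermediate vertex sequences; hence a bad pair at $v$ must come from original parallel edges, and it is exactly the inapplicability of \ruleMod{}/\ruleBigPendant{} together with the splitting provided by \ruleGenTwoGraphs{} that precludes a lone surviving child.
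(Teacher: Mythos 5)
Your reduction of the problem to the behavior of bad parallel edges at $v=t(P)$, and your handling of cases (a) and (b), are sound and consistent with the paper's (much shorter) argument. The genuine gap is in your third case, and the root cause is that you apply the contraction rules at the wrong vertex. The paper resolves this case by observing that it \emph{cannot occur} in a contracted graph: if $\cand{P}{G}$ consists of bad parallel edges to a single neighbor $w$ together with a bridge $(v,u)$, then $v$ has exactly two distinct neighbors, $\multiplicity{v}(u)=1$, $(v,w)$ is bad, and $v$ has no loops --- precisely the hypotheses of \ruleMod{} at $v$ itself, contradicting contractedness; and if there is no bridge and $\multiplicity{v}(w)\ge 3$, then \ruleBigPendant{} applies at $v$, again a contradiction. (The paper phrases this as: $t(P)$ is incident to at most one bridge; if it is incident to none, the claim follows since every vertex of the contracted graph has at least two distinct neighbors; if it is incident to one and had only two distinct neighbors, \ruleMod{} would be applicable, so $t(P)$ has at least three distinct neighbors and hence at least two non-bridge edges toward distinct vertices.) No appeal to the structure at $w$ is needed.

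Your own resolution of case (c) does not work. First, whether $w$ ``exposes a second distinct neighbor downstream'' is irrelevant: the children of $P$ are determined solely by the candidate edges at $t(P)=v$, so downstream branching at $w$ produces children of $Q=P+e$, not of $P$. Second, the dichotomy ``either a second neighbor downstream or the hypotheses of \ruleGenTwoGraphs{}'' is not implied by contractedness at $w$; for instance $w$ may have three distinct neighbors, which satisfies neither horn in a way that helps $P$. Third, even when \ruleGenTwoGraphs{} does apply at $w$, it produces two graphs only when $w$ is not a cut vertex; if $w$ is a cut, only $G_1$ is created and $P$ would still have a single child, so Multibirth splitting cannot be the mechanism that guarantees two children --- and indeed the lemma's hypothesis ``contracted'' is defined by inapplicability of \rulePendant{} through \ruleBigPendant{} only, so its proof should not presuppose the Multibirth machinery at all. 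A smaller issue: you import \cref{lem:child:at:least:two} from the edge-distinct section, but that lemma was proved for graphs contracted under \rulePendant{} and \ruleBridge{} alone; in the present setting loops can occur and $t(P)$ can have degree two, so this transfer would itself require justification, whereas the paper instead relies directly on the fact that contraction under all four rules leaves every vertex with at least two distinct neighbors.
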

\begin{proof}
    Because $G$ is contracted, each vertex has at least two distinct neighbors.
    If $t(P)$ is incident to two bridges, then this contradicts with the definition of a partial Eulerian trail. Thus, $t(P)$ is incident to at most one bridge.
    In addition, if $t(P)$ is not incident to a bridge, then the lemma clearly holds.

    Suppose that $t(P)$ is incident to a bridge $e = (t(P), u)$.
    If $t(P)$ is adjacent to exactly two vertices, then this implies that we can apply \ruleMod{}. Hence, $t(P)$ is adjacent to more than two distinct vertices and the statement holds.
\end{proof}

By \cref{lem:v:time:contraction,lem:v:pet:has:two:children,lem:time:gen:two:inputs}, each non-leaf node $P$ on the family tree made by the algorithm has at least two children and can be done in $\order{m_P}$ time, where $m_P$ is the number of edges in the contracted graph $\remain{P}{G}$. Remind that the computation time is dominated by detecting bridges on $\remain{P}{G}$. 
Thus, by the same discussion in the previous section, the main theorem in this section is established. 

\begin{theorem}
    All vertex-distinct Eulerian trails can be enumerated in constant amortized time per solution with linear time preprocessing and linear space. 
\end{theorem}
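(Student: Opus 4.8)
The plan is to mirror the edge-distinct analysis exactly, substituting the vertex-distinct contraction machinery developed in this section. First I would establish correctness. The algorithm traverses the family tree $\mathcal{T}$ by depth-first search from the root $R$, generating children via \cref{lem:child:iff} together with the modifications \ruleMod{} through \ruleGenTwoGraphs{}. Since each rule was argued to preserve a one-to-one correspondence between the solutions of $G$ and those of the modified graph (and, for \ruleGenTwoGraphs{}, a disjoint-union correspondence between the solutions of $G_1$ and $G_2$), every vertex-distinct Eulerian trail appears as exactly one leaf of $\mathcal{T}$, so the algorithm outputs each once and only once.

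For the running-time bound I would combine the three key lemmas of this section. By \cref{lem:v:time:contraction} and \cref{lem:time:gen:two:inputs}, computing $\remain{P}{G}$ from the parent's already-contracted graph and applying the multibirth rule are each performed in constant time, so the work at a node is dominated by bridge detection on $\remain{P}{G}$, which Tarjan's algorithm carries out in $\order{m_P}$ time. Hence $T(P) = \order{m_P}$, and $T^* = \order{1}$ because a leaf's remaining graph has constant size. Next I would verify the push out condition. By \cref{lem:v:pet:has:two:children}, every proper partial Eulerian trail has at least two children, and by the contraction lemmas the number of edges of a child's contracted remaining graph differs from that of $P$ by only a constant. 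Thus $\overline{T}(P)$, the total generation time of the children of $P$, is at least $2\,T(P)$ minus an $\order{(\size{\children{P}}+1)T^*}$ correction, which is exactly the push out condition for suitable constants $\alpha > 1$ and $\beta \ge 0$. Invoking Uno's push out amortization then yields $\order{T^*} = \order{1}$ amortized time per leaf, that is, per solution, without counting output.

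Finally I would handle output and resources as in the previous section: maintaining $W$ and $\mathcal{I}$ as doubly-linked lists and printing only the constant-size differences $+(e,\xi)$, $-(e,\xi)$, and $\#$, together with the analogue of \cref{cor:num:solutions} bounding the number of generated nodes by $\order{\solsize}$, keeps the total output length $\order{\solsize}$; linear preprocessing builds the initial contracted graph and linear space suffices for the two working memories.

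The main obstacle I anticipate lies in \ruleGenTwoGraphs{}, which locally replaces one child by two. One must confirm that this does not break either the bounded edge-count difference feeding into the push out condition or the \enquote{once and only once} guarantee: concretely, that the disjointness of the solution sets of $G_1$ and $G_2$ is genuine, so no duplicate trails are produced, and that the local, non-copying implementation still presents each recursive call with a valid contracted graph whose size differs from that of $\remain{P}{G}$ by only a constant. Once these points are checked, the amortized analysis goes through verbatim as in the edge-distinct case.
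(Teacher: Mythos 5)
Your proposal is correct and follows essentially the same route as the paper: it combines \cref{lem:v:time:contraction}, \cref{lem:time:gen:two:inputs}, and \cref{lem:v:pet:has:two:children} to show each non-leaf node has at least two children and costs $\order{m_P}$ time (dominated by bridge detection), then reuses the push out amortization and difference-output machinery of the edge-distinct section verbatim. The caveats you flag about \ruleGenTwoGraphs{} (disjointness of the solution sets of $G_1$ and $G_2$, and the constant-size local modification instead of whole-graph copying) are precisely the points the paper disposes of in the observation following that rule and in \cref{lem:time:gen:two:inputs}, so nothing further is missing.
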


\section*{Acknowledgement}
This work was partially supported by JST CREST Grant Number JPMJCR18K3 and JSPS KAKENHI Grant Numbers 19K20350, JP19J10761, and JP20H05793, Japan.

\bibliography{euler}
\bibliographystyle{plain}

\end{document}